\documentclass{amsart}

\usepackage{amsmath,amssymb}
\usepackage{amsthm}
\usepackage{latexsym}
\usepackage{graphicx}
\usepackage{color}
\usepackage{subfigure}
\usepackage{mathptmx}

\newcommand{\NN}{\mathbb{N}}

\newcommand{\ZZ}{\mathbb{Z}}

\newcommand{\RR}{\mathbb{R}}
\newcommand{\CC}{\mathbb{C}}

\theoremstyle{definition}
\newtheorem{theorem}{Theorem}[section]
\newtheorem{defn}[theorem]{Definition}

\newtheorem{thm}{Theorem}[section]

\interdisplaylinepenalty=2500

\title{When interpolation-induced reflection artifact meets time-frequency analysis}

\author[Yu-Ting~Lin]{Yu-Ting~Lin}
\address{Department of Anesthesiology, Shin Kong Wu Ho-Su Memorial Hospital, Taipei, Taiwan} 
\email{seagaia.lin@msa.hinet.net}
\author[Patrick~Flandrin]{Patrick~Flandrin}
\address{Physics Department (UMR 5672 CNRS), Ecole Normale Sup\'erieure de Lyon, 46 all\'ee dItalie, 69364 Lyon Cedex 07 France}
\email{Patrick.Flandrin@ens-lyon.fr}
\author[Hau-Tieng Wu]{Hau-Tieng Wu}
\address{Department of Mathematics, University of Toronto, Toronto Ontario Canada}
\email{hauwu@math.toronto.edu}

\begin{document}

\maketitle

\begin{abstract}
While extracting the temporal dynamical features based on the time-frequency analyses, like the reassignment and synchrosqueezing transform, attracts more and more interest in bio-medical data analysis, we should be careful about artifacts generated by interpolation schemes, in particular when the sampling rate is not significantly higher than the frequency of the oscillatory component we are interested in.
In this study, we formulate the problem called the reflection effect and provide a theoretical justification of the statement. We also show examples in the anesthetic depth analysis with clear but undesirable artifacts. 
The results show that the artifact associated with the reflection effect exists not only theoretically but practically. Its influence is pronounced when we apply the time-frequency analyses to extract the time-varying dynamics hidden inside the signal. 
In conclusion, we have to carefully deal with the artifact associated with the reflection effect by choosing a proper interpolation scheme. 

\end{abstract}

\section{Introduction}
It has been widely accepted that several aspects of the health status could be well observed by analyzing recorded physiological time series. In particular, the {\em time-varying oscillatory pattern} inside the electrocardiogram (ECG) or respiratory signal contains abundant health information, for example, the heart rate variability (HRV) \cite{Malik_Camm:1995,TaskForce:1996,Lewis_Furman_McCool_Porges:2012} hidden inside the R peak to R peak interval (RRI) time series and the instantaneous heart rate (IHR), the breathing pattern variability (BPV) representing the time varying rate of the respiratory signal \cite{Benchetrit:2000,Wysocki_Cracco_Teixeira_Mercat_Diehl_Lefort_Derenne_Similowski:2006,Wu_Hseu_Bien_Kou_Daubechies:2013}. It is well known that power spectrum is not a suitable tool when the time-varying dynamics in the signal is the main target to analyze, as power spectrum reflects only the {\em global} oscillatory information, and hence could not properly extract the dynamical information, which is {\em local} in nature. In general, a popular and powerful way to study the time-varying oscillatory pattern inside a time series is the time-frequency (TF) analysis, which allows us to efficiently extract how a signal oscillates at each time instant. There have been several TF analysis techniques proposed, including linear methods like short time Fourier transform (STFT), continuous wavelet transform (CWT) \cite{Daubechies:1992,Flandrin:1999} and multiwindow approach \cite{Jaillet_Torresani:2007}, the quadratic methods like Wigner-Ville distribution and Cohen class \cite{Flandrin:1999}, nonlinear methods like reassignment (RM) technique \cite{Auger_Flandrin:1995,Chassande-Mottin_Auger_Flandrin:2003}, synchrosqueezing transform (SST) \cite{Daubechies_Lu_Wu:2011,Chen_Cheng_Wu:2014}, multi-tapered RM \cite{Xiao_Flandrin:2007}, multi-tapered SST \cite{Lin_Wu_Tsao_Yien_Hseu:2014,Lin:2015Thesis}, ConceFT \cite{Daubechies_Wang_Wu:2015}, empirical model decomposition \cite{Huang_Shen_Long_Wu_Shih_Zheng_Yen_Tung_Liu:1998}, sparse TF analysis \cite{Hou_Shi:2011}, iterative filtering \cite{Cicone_Liu_Zhou:2014}, etc. The potential of the TF analysis has been shown in several different fields, in particular the medical field; for example, \cite{Huang_Chan_Lin_Wu_Huang:1997,SouzaNeto_Loiseau_Cejka_Custaud_Abry_Frutoso_Gharib_Flandrin:2007,Orini_Bailon_Mainardi_Laguna_Flandrin:2012,Lin_Wu_Tsao_Yien_Hseu:2014}, to name but a few.

While there exists a lot of information in the physiological signals we could easily approach, in some signals, like IHR, there are two particular features that should not be neglected. First, they are sampled in a non-uniform fashion; second, in many situations, they are often sampled at a rate which is not significantly high. 
Typical examples include the IHR and the ECG-derived respiratory (EDR) signal extracted from the ECG signal, the IHR estimated from the photoplethysmography signal \cite{Gil_Orini_Bailon_Vergara_Mainardi_Laguna:2010}, etc, in which case the sampling is non-uniform and the sampling rate is determined by the heart rate. 
In order to apply the TF analysis to study these observed signals, a common practice is to apply the digital-to-analogue conversion to recover the original continuous signal \cite{Moody_Mark_Zocoola_Mantero:1985,Malik_Camm:1995,Huang_Chan_Lin_Wu_Huang:1997,Chui_Lin_Wu:2014,Lin_Wu_Tsao_Yien_Hseu:2014}, for example, the spline interpolation. Among different choices of the spline interpolation schemes, the cubic spline interpolation is commonly chosen as it balances between the interpolation property and the overfitting issue raised by the interpolation. On the other hand, the interpolation is commonly associated with a convolutional kernel. While it is generally accepted that the quality of recovering the underlying signal is good by a suitably chosen interpolation scheme, it might cause some undesirable artifacts, for example, by the side lobe effect inherited in the spline interpolation scheme \cite{Unser:1999}.
Thus, although the TF analysis have been successfully applied to several physiological problems for different purposes, we should be careful about the analysis results. For example, while the main purpose of the nonlinear TF analysis techniques like RM and SST is to sharpen/enhance the TF representation determined by the linear TF analysis, these techniques might also enhance the artifact generated by the interpolation scheme and hence mislead the interpretation. This problem is further complicated by the possible non-uniform sampling scheme. 

To demonstrate the potential problem caused by this issue, here we show a real example in the respiratory signal analysis. The airflow respiratory signal and the ECG signal are recorded simultaneously, and we obtain the EDR signal from the ECG signal. The EDR signal is generated by interpolating the amplitudes of detected R peaks via the cubic spline interpolation, and is influenced by the inevitable sampling effect inherited in the R peak location, which is not only non-uniform but also with low sampling rate. In Figure \ref{FigExample}, the respiratory signal and the EDR signal are shown together, as well as their TF representations determined by the multi-tapered RM. Clearly, while these two signals ``look'' similar in the sense of the ``fast-slow'' oscillatory pattern, their TF representations are very different. While the TF representation of the respiratory signals show the multiples of the base respiratory frequency at about 0.5Hz, the TF representation of the EDR signal show ``two different components'', where the component with the higher frequency is not the multiple of the base respiratory frequency at about 0.5Hz. Clearly, the TF representation of the EDR signal will mislead the interpretation.

\begin{figure}[t]
\centering
\includegraphics[width=0.9\textwidth]{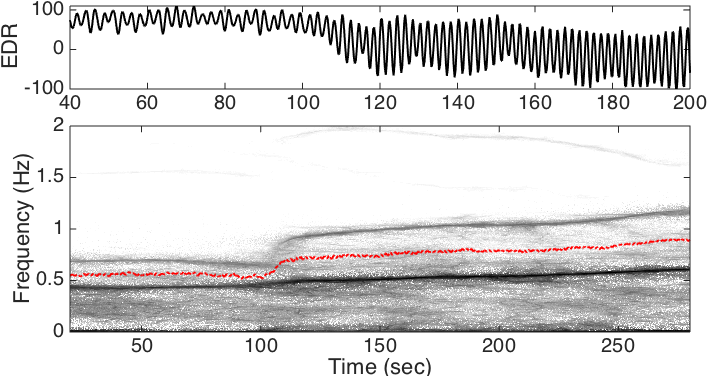}
\includegraphics[width=0.9\textwidth]{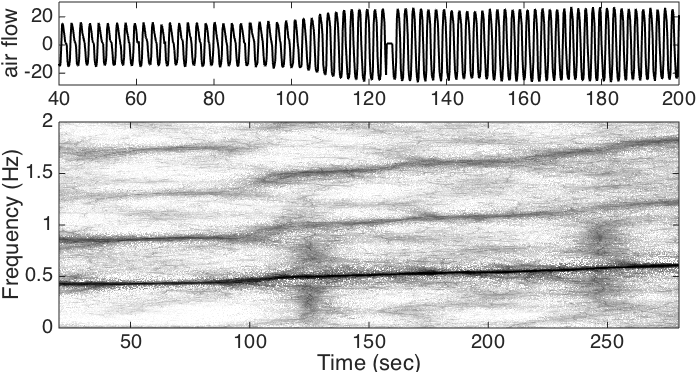}
\caption{The time-frequency representations of the ECG derived respiratory (EDR) signal and the airflow signal determined by the multi-taper reassignment (RM). The signals are recorded simultaneously. Top: the EDR signal based on the cubic spline interpolation from the R peaks amplitudes determined from the lead II ECG signal. The mean of the EDR signal is removed. Middle top: the multi-taper RM of the EDR. The instantaneous Nyquist frequency (INF) is superimposed as a red dashed curve. Middle bottom: the airflow signal recorded simultaneously. At around 150 second, an automatic calibration happens, which leads to a short zero period. Bottom: the multi-taper RM of the airflow signal. Note that the breathing rate is below 0.5 Hz in the beginning and increases gradually. It is clear that the base component with instantaneous frequency (IF) about $0.5$ Hz in the airflow signal is well captured by the EDR signal, while in the EDR signal there is an artificial component with IF higher than INF, which mirrors the base component IF via the INF.}\label{FigExample} 
\end{figure}

Thus, to extract the correct information from this kind of time series, it is essential to understand the influence on the analysis results caused by the sampling scheme, the interpolation scheme and the TF analysis. In this paper, the reflection effect caused by the interpolation scheme is formalized, how the sampling scheme is involved is discussed, and a theoretical justification of this effect is provided. In addition, we show several medical examples where the artifacts might mask the whole interpretation.

This paper is organized in the following way. In Section \ref{Section:ModelNyquist}, the adaptive harmonic model is introduced to model the commonly encountered oscillatory signals. The notion instantaneous Nyquist rate is introduced to quantify the nonuniform sampling that serves as a framework for our analysis. In Section \ref{Section:reflection}, the reflection effect is formalized and a theoretical justification is provided. In addition, a solution to this artificial effect is proposed. A series of numerical evidences, as well as real examples from the anesthesia are provided in Sections \ref{Section:Numerical} and \ref{Section:Real}. In Section \ref{Section:Conclusion}, we conclude the paper with a series of discussions.

\section{The adaptive harmonic model and instantaneous Nyquist rate}\label{Section:ModelNyquist}

{Among different features of time series, oscillatory pattern is the main target of several TF analysis techniques. When it comes to the oscillation, Fourier analysis is commonly the first choice among others. However, in general, the oscillatory pattern might change according to time, and its momentary behavior might not be easily captured by the Fourier technique and new techniques are needed. While the time-varying oscillatory pattern might be as complicated as one could imagine, in medical applications it is commonly controlled by the physiological constraints. For example, the heart rate could not be as fast as possible, and the amplitude of the respiratory signal is limited by the lung capacity. Thus, to quantify the oscillatory pattern inside an observed time series, we consider the following {\em adaptive harmonic model} \cite{Daubechies_Lu_Wu:2011,Chen_Cheng_Wu:2014}:}
\begin{defn}
Fix $0< \epsilon\ll 1$, $\epsilon\ll c_1< c_2$. The functional class, $\mathcal{A}^{c_1,c_2}_{\epsilon}\subset C^1(\RR)\cap L^\infty(\RR)$, contains functions of the form $a(t)\cos(2\pi\phi(t))$, where $a\in C^1(\RR)\cap L^\infty(\RR)$, $\phi\in C^2(\RR)$, $c_1\leq a(t)\leq c_2$, $c_1\leq \phi'(t)\leq c_2$ $|a'(t)|\leq \epsilon \phi'(t)$ and $|\phi''(t)|\leq \epsilon \phi'(t)$ for all time $t\in\RR$. 
\end{defn}
For $f(t)=a(t)\cos(2\pi\phi(t))\in \mathcal{A}^{c_1,c_2}_{\epsilon}$, we call $\phi'(t)$ the instantaneous frequency (IF) and $a(t)$ the amplitude modulation (AM) of $f(t)$. In the adaptive harmonic model, locally all functions in $\mathcal{A}^{c_1,c_2}_{\epsilon}$ behave like a harmonic function (or sine wave), and the deviation from being a harmonic function is controlled by $\epsilon$. We call a function in $\mathcal{A}^{c_1,c_2}_{\epsilon}$ an {\it intrinsic mode type} (IMT) function.
A more general model consisting of functions with multiple IMT functions and more theoretical discussions could be found in \cite{Daubechies_Lu_Wu:2011,Chen_Cheng_Wu:2014}; the identifiability issue of the adaptive harmonic model and the noise and trend model issue have been studied and discussed in \cite{Chen_Cheng_Wu:2014}. While the following discussion carries over into to the more general model, in this paper, to simplify the discussion, we focus on $\mathcal{A}^{c_1,c_2}_{\epsilon}$, where there is only one IMF inside the signal. 
The adaptive harmonic model has been applied to study the ``non-stationary'' physiological dynamics \cite{Wu_Hseu_Bien_Kou_Daubechies:2013,Lin_Wu_Tsao_Yien_Hseu:2014,Baudin_Wu_Bordessoule_Beck_Jouvet_Frasch_Emeriaud:2014,Wu_Talmon_Lo:2014,Orini_Bailon_Mainardi_Laguna_Flandrin:2012,Lin:2015Thesis}. 
Before proceeding, we comment that most time series we could acquire in the real world are real, so we consider only real model but not the exponential complex model. We mention that obtaining the imaginary part of a given real signal to recover its exponential complex form is not a trivial problem \cite{Bedrosian:1962,Nuttall:1966,Huang_Wang_Yang:2013}.

Next we discuss the sampling effect. Take $f(t)=a(t)\cos(2\pi\phi(t))\in \mathcal{A}^{c_1,c_2}_{\epsilon}$. Consider a monotonic increasing function $\psi\in C^1(\RR)$. 
Define a sequence of sampling points $\{t_m\}_{m\in\ZZ}$ so that $t_m=\psi^{-1}(m)$. With this sampling scheme, we obtain samples $\mathcal{X}:=\{t_m, f(t_m)\}_{m\in\ZZ}$. Note that if $\psi(t)=kt$, where $k>0$, then the sampling is uniform and we get a sample every $1/k$ second. 
To study how much information about $f$ we could obtain from the sampled dataset, when $\psi$ is a linear function and $f$ is a band-limited function, we could consider the Nyquist-Shannon theory. However, the application of Nyquist-Shannon theory is not efficient in our case, since a generic function in $\mathcal{A}^{c_1,c_2}_{\epsilon}$ has a non-compact support\footnote{For the definition of support, we refer the reader to \cite[P. 284]{Folland:1999}.} in the Fourier domain. The application of the Nyquist-Shannon theory is more difficult in the case when $\psi$ is nonlinear; that is, when the sampling is non-uniform. Thus, we consider another definition to describe the sampling scheme which reflects the momentary nature of dynamical analysis.

\begin{defn}\label{Definition:ISR}
Fix $c>0$ and $0\leq\epsilon\ll c$. Take $\psi\in C^2(\RR)$ so that $c\leq \psi'(t)$ and $|\psi''(t)|\leq \epsilon \psi'(t)$ for all $t\in\RR$. We call $\psi'(t)$ the {\it instantaneous sampling rate} (ISR) and $\psi'(t)/2$ the {\it instantaneous Nyquist frequency} (INF).
\end{defn}

Note that ISR and INF naturally generalize the notion of sampling rate and Nyquist frequency -- the higher the ISR is at a moment, the higher the sampling rate is around this moment. 
The condition $|\psi''(t)|\leq \epsilon \psi'(t)$ says that locally the sampling is close to a uniform sampling. 
A natural problem regarding the above definition of ISR and INF is the identifiability issue. Precisely, given a sampling points $\{t_m\}_{m\in\ZZ}$, we could find many different functions $\psi$ which leads to the same sampling points. We claim that under the provided condition of ISR and INF, they are well-defined up to an error of order $\epsilon$.

\begin{thm}
Fix $c>0$. If $\psi,\tilde{\psi}\in C^2(\RR)$ both satisfy the conditions for ISR and generate the same sampling points $\{t_m\}_{m\in\ZZ}$, then we have $|\psi'(t)-\tilde{\psi}'(t)|\leq 2\epsilon$ and $|\psi(t)-\tilde{\psi}(t)|\leq 2\epsilon(t_{m+1}-t_m)$ for all time $t\in[t_m,t_{m+1}]$. Globally, we have $|\psi'(t)-\tilde{\psi}'(t)|\leq 2\epsilon$ and $|\psi(t)-\tilde{\psi}(t)|\leq 2\epsilon/c$ for all time $t\in\RR$.
\end{thm}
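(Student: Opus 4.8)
The plan is to exploit the single rigid constraint shared by $\psi$ and $\tilde{\psi}$: since they generate the same sampling points, $\psi(t_m) = \tilde{\psi}(t_m) = m$ for every $m \in \ZZ$, so over each interval $[t_m,t_{m+1}]$ both functions increase by exactly $1$. The first step is to record, via the mean value theorem applied separately to $\psi$ and to $\tilde{\psi}$ on $[t_m,t_{m+1}]$, that each derivative attains the common anchor value $1/(t_{m+1}-t_m)$ somewhere inside: there exist $\xi_m,\tilde{\xi}_m\in(t_m,t_{m+1})$ with $\psi'(\xi_m) = \tilde{\psi}'(\tilde{\xi}_m) = 1/(t_{m+1}-t_m)$.

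The core estimate, which I expect to carry the real weight of the argument, is to show that $\psi'$ cannot stray far from this anchor across the whole interval. Writing $\psi'(t) - \psi'(\xi_m) = \int_{\xi_m}^{t}\psi''(s)\,\ud s$ and invoking the curvature bound $|\psi''|\le\epsilon\psi'$, I bound the deviation by $\epsilon\int_{t_m}^{t_{m+1}}\psi'(s)\,\ud s$, where $\psi'>0$ has been used to enlarge the range of integration to the full interval. But this integral equals $\psi(t_{m+1})-\psi(t_m)=1$, so $|\psi'(t) - 1/(t_{m+1}-t_m)|\le\epsilon$ on the entire interval, and identically $|\tilde{\psi}'(t) - 1/(t_{m+1}-t_m)|\le\epsilon$. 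A triangle inequality then yields $|\psi'(t)-\tilde{\psi}'(t)|\le 2\epsilon$ locally, and since every $t\in\RR$ lies in some interval $[t_m,t_{m+1}]$, this is exactly the global derivative bound.

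The function bounds follow by integration. Because $\psi(t_m)=\tilde{\psi}(t_m)$, for $t\in[t_m,t_{m+1}]$ we have $\psi(t)-\tilde{\psi}(t) = \int_{t_m}^{t}(\psi'(s)-\tilde{\psi}'(s))\,\ud s$, so the derivative bound just proved gives $|\psi(t)-\tilde{\psi}(t)|\le 2\epsilon(t-t_m)\le 2\epsilon(t_{m+1}-t_m)$, the local statement. For the global version I would convert the interval length into the constant $c$: the lower bound $\psi'\ge c$ forces $1=\int_{t_m}^{t_{m+1}}\psi'(s)\,\ud s\ge c(t_{m+1}-t_m)$, hence $t_{m+1}-t_m\le 1/c$, and substituting gives $|\psi(t)-\tilde{\psi}(t)|\le 2\epsilon/c$ uniformly in $t$.

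The only delicate point is the core estimate. The temptation is to bound $|\psi''|$ by $\epsilon$ times a pointwise value of $\psi'$, which is useless since $\psi'$ is not controlled from above; the right move is to integrate the \emph{relative} bound and let the telescoping identity $\psi(t_{m+1})-\psi(t_m)=1$ supply the absolute control that turns the curvature hypothesis into the uniform $\epsilon$-deviation. Once that observation is in place, the remainder is routine use of the mean value theorem, the triangle inequality, and the lower bound $\psi'\ge c$.
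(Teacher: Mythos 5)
Your proof is correct, and it takes a genuinely different (and in fact more careful) route than the paper's. The paper begins with the inequality $|\psi'(t)-\tilde{\psi}'(t)|\leq \int_{t_m}^t|\psi''(s)-\tilde{\psi}''(s)|\,\ud s$, which silently discards the boundary term $\psi'(t_m)-\tilde{\psi}'(t_m)$; that term vanishes only if the two derivatives happen to agree at the sampling point, which is not among the hypotheses, so the paper's opening step has a gap. You avoid this by anchoring each derivative separately: the mean value theorem and the rigidity $\psi(t_{m+1})-\psi(t_m)=\tilde{\psi}(t_{m+1})-\tilde{\psi}(t_m)=1$ give a common value $1/(t_{m+1}-t_m)$ attained by both $\psi'$ and $\tilde{\psi}'$ inside the interval, and integrating the relative curvature bound against the telescoping identity $\int_{t_m}^{t_{m+1}}\psi'(s)\,\ud s=1$ pins each derivative within $\epsilon$ of that anchor, after which the triangle inequality yields the $2\epsilon$ bound. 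The remaining steps (integrating from $t_m$ where the functions agree, and converting $t_{m+1}-t_m\leq 1/c$ via $\psi'\geq c$) coincide with the paper's. The net effect is that your argument proves the stated bounds under exactly the stated hypotheses, whereas the paper's written argument needs either your anchoring step or an additional assumption at the nodes to be complete.
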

\begin{proof}
Note that we have $\psi(t_m)=\tilde{\psi}(t_m)=t$ for all $m\in\ZZ$. Consider $t\in[t_m,t_{m+1})$. By a direct calculation, we have 
\begin{align}
|\psi'(t)-\tilde{\psi}'(t)|&\leq \int_{t_m}^t|\psi''(s)-\tilde{\psi}''(s)|d s\\
&\leq \epsilon \int_{t_m}^t(\psi'(s)+\tilde{\psi}'(s))ds\nonumber\\
&=\epsilon[(\psi(t)-\psi(t_m))+(\tilde{\psi}(t)-\tilde{\psi}(t_m))]\nonumber\\
&\leq 2\epsilon(\psi(t_{m+1})-\psi(t_m))=2\epsilon\nonumber,
\end{align}
where the second inequality holds by the assumption that $|\psi''(t)|\leq \epsilon \psi'(t)$ and $|\tilde{\psi}''(t)|\leq \epsilon \tilde{\psi}'(t)$ both hold, and the last inequality holds by the monotonic assumption of $\psi$ and $\tilde{\psi}$. To finish the proof, note that by the mean value theorem, we have $\psi(t_{m+1})-\psi(t_m)=\psi'(t')(t_{m+1}-t_m)$ for some $t'\in[t_m,t_{m+1}]$, which leads to 
\begin{equation}
t_{m+1}-t_m=\frac{\psi(t_{m+1})-\psi(t_m)}{\psi'(t')}=\frac{1}{\psi'(t')}\leq \frac{1}{c},
\end{equation}
where the last inequality holds by the assumption that $\psi'(t)\geq c$ and $\tilde{\psi}'(t)\geq c$.
Thus, we have
\begin{align}
|\psi(t)-\tilde{\psi}(t)|&\leq \int_{t_m}^t|\psi'(s)-\tilde{\psi}'(s)|d s\\
&\leq 2\epsilon (t-t_m)\leq 2\epsilon (t_{m+1}-t_m)\leq 2\epsilon/c\nonumber
\end{align}
and hence the proof is done.
\end{proof}
This theorem essentially says that the ISR for a given sampling points $\{t_m\}_{m\in\ZZ}$ is well defined up to the error of order $\epsilon$. We mention that for a randomly given sampling points $\{t_m\}_{m\in\ZZ}$ so that $t_{m+1}>t_m$, we may not be able to define a ISR function which satisfies the condition in Definition \ref{Definition:ISR}. In this paper, we focus on sampling scheme which satisfies Definition \ref{Definition:ISR}. 

In addition to ISR and INF, note that we could also naively generalize the notion of Nyquist rate to its instantaneous version.
\begin{defn} 
Take a signal $f(t)=a(t)\cos(2\pi\phi(t))\in\mathcal{A}^{c_1,c_2}_{\epsilon}$. We call $2\phi'(t)$ the {\it instantaneous Nyquist rate} (INR) of the function $f(t)$.
\end{defn}

Note that the INR reduces to the notion of Nyquist rate when $\phi(t)$ is linear and $a(t)$ is constant. We will always assume that $\psi'(t)>2\phi'(t)$; that is, at each time instant, we have at least two sampling points from an oscillation, otherwise the oscillatory information might be lost. Thus, this is again a natural generalization of the sampling theory under the uniform sampling scheme and the band-limited assumption.

\section{The reflection effect}\label{Section:reflection}

With the adaptive harmonic model and the samples, we would like to study the underlying dynamical features, like the IF and AM of the signal. A common practice to convert the discretized sampling $\mathcal{X}$ to a continuous function is via an interpolation scheme. 
We now show the structured artifacts caused by the commonly applied spline interpolation scheme. In particular, the IF of an artificial component in the interpolated signal is a reflection of $\phi'$ associated with the INF.  Further, while the INF is closer to the IF, this artifact is more severe, which is the case we encounter when we study the IHR or EDR signals.

\begin{thm}\label{claim:main}
Take $f(t)=a(t)\cos(2\pi\phi(t))\in\mathcal{A}^{c_1,c_2}_{\epsilon}$ and sample $f(t)$ with the ISR $\psi'(t)>2\phi'(t)$. Fix a $n$-th order spline interpolation, where $n\geq 1$, and denote the interpolated signal as $\tilde{f}_{n}(t)$. Then, we have
\begin{align}
\tilde{f}_{n}(t)= a(t) \sum_{k\in\ZZ} c_n(k)\cos[2\pi (k\psi(t)-\phi(t))]+O(\sqrt{\epsilon}),
\end{align}
where $c_{n}(k)\in\RR$ depends on $n$.
\end{thm}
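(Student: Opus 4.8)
The plan is to reduce the non-uniform spline interpolation to a \emph{uniform} (cardinal) interpolation in the warped time variable $s=\psi(t)$, read off the artificial components by a Poisson-summation identity, and then control the errors coming from the slow variation of $a$ and $\phi$.

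First I would rewrite the interpolant in the warped coordinate. Because the samples sit at $\psi(t_m)=m$, the $n$-th order spline interpolant can be written as $\tilde{f}_n(t)=\sum_{m\in\ZZ}f(t_m)\,\varphi_n(\psi(t)-m)$, where $\varphi_n$ is the fundamental (cardinal) spline of order $n$, i.e.\ the unique spline with $\varphi_n(\ell)=\delta_{\ell,0}$ for $\ell\in\ZZ$. I would record the two properties of $\varphi_n$ that drive everything: it is even, so its Fourier transform $\widehat{\varphi}_n$ is real and even, and it decays exponentially (it is compactly supported when $n=1$), so the moments $\sum_{m}|m|^j|\varphi_n(x-m)|$ are bounded uniformly in $x$. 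Substituting $f(t_m)=a(t_m)\cos(2\pi\phi(t_m))$ and writing the cosine as $\tfrac12(e^{2\pi i\phi(t_m)}+e^{-2\pi i\phi(t_m)})$, I would then (i) freeze the slowly-varying amplitude, replacing $a(t_m)$ by $a(t)$, and (ii) linearize the phase in the warped variable. Setting $\Phi(s):=\phi(\psi^{-1}(s))$ we have $\phi(t_m)=\Phi(m)$ and, with $s=\psi(t)$, $\Phi(m)=\Phi(s)+\Phi'(s)(m-s)+R_m$, where $\Phi'(s)=\phi'(t)/\psi'(t)=:\omega(t)\in(0,\tfrac12)$ since $\psi'>2\phi'$.

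After freezing and linearizing, each half of the cosine becomes $e^{\pm 2\pi i\phi(t)}\sum_{m}e^{\mp 2\pi i\omega(t)(s-m)}\varphi_n(s-m)$, to which Poisson summation applies: $\sum_{m}e^{\mp 2\pi i\omega(s-m)}\varphi_n(s-m)=\sum_{k\in\ZZ}\widehat{\varphi}_n(k\pm\omega)\,e^{2\pi iks}$. Reassembling the two halves, relabeling $k\mapsto-k$ in one of them, and invoking the evenness of $\widehat{\varphi}_n$, I would collect the terms sharing the oscillation $e^{2\pi ik\psi(t)}$ to obtain $a(t)\sum_{k}\widehat{\varphi}_n(k-\omega(t))\cos[2\pi(k\psi(t)-\phi(t))]$, which is exactly the claimed expansion with $c_n(k)=\widehat{\varphi}_n\big(k-\phi'(t)/\psi'(t)\big)$. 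The $k=0$ term reproduces the genuine component (IF $\phi'$), whereas the $k=1$ term $\cos[2\pi(\psi(t)-\phi(t))]$ has IF $\psi'-\phi'=2\,\mathrm{INF}-\phi'$, the reflection of $\phi'$ about the INF; and since $\omega\to\tfrac12$ as $\phi'\to\mathrm{INF}$, the coefficients $\widehat{\varphi}_n(-\omega)$ and $\widehat{\varphi}_n(1-\omega)$ coalesce, which is precisely why the artifact worsens as the IF approaches the INF.

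The remaining work, and the main obstacle, is the error bound. The amplitude-freezing error is controlled by $|a(t_m)-a(t)|\le\epsilon\phi'\,|t_m-t|$ summed against $\varphi_n$, and the phase-linearization error by the quadratic remainder $R_m$, for which I would estimate $|\Phi''(s)|\le 2\epsilon\phi'/(\psi')^2\le\epsilon/c=O(\epsilon)$, using $|\phi''|\le\epsilon\phi'$, $|\psi''|\le\epsilon\psi'$, and $\psi'\ge c$. Both errors take the form $\sum_{m}(\text{local deviation})\times|\varphi_n(s-m)|$; truncating the sum to a window of radius $L$ about $s$ makes the phase-curvature contribution $O(\epsilon L^2)$ while the exponentially small tail of $\varphi_n$ contributes $O(\rho^{L})$ for some $\rho<1$, and choosing $L\asymp\epsilon^{-1/4}$ balances these to the stated $O(\sqrt{\epsilon})$. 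The delicate point is to make all of these estimates \emph{uniform in} $t$ --- in particular near the resonance $\omega\to\tfrac12$, where the $k=0$ and $k=1$ reconstructed components overlap --- so that a single global remainder $O(\sqrt{\epsilon})$ is legitimate.
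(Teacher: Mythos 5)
Your treatment of the uniform (warped) case is correct and is essentially the paper's own argument in different clothing: the Poisson-summation identity $\sum_{m}e^{\mp 2\pi i\omega(s-m)}\varphi_n(s-m)=\sum_{k}\widehat{\varphi}_n(k\pm\omega)e^{2\pi iks}$ is exactly the paper's computation of $\widehat{f_\delta}$ against $\widehat{\eta_{(n)}}$ for a Dirac comb, and your freezing of $a$ and linearization of the phase is the paper's local harmonic approximation $f^{(l_0)}$, with the exponential decay of the fundamental spline controlling the tails in both versions. The resulting coefficients $\widehat{\varphi}_n(k-\phi'/\psi')$ and the identification of the $k=1$ term as the reflection about the INF also match the paper.

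The gap is at the very first line: the identity $\tilde{f}_n(t)=\sum_{m}f(t_m)\,\varphi_n(\psi(t)-m)$ is \emph{not} the $n$-th order spline interpolant on the non-uniform knots $\{t_m\}$ unless $\psi$ is affine. The true interpolant is a piecewise polynomial in $t$ on the intervals $[t_m,t_{m+1}]$, whereas your expression is a piecewise polynomial in $s=\psi(t)$, i.e.\ a composition of a cardinal spline with a nonlinear warp; the two agree at the knots but generally nowhere else, and the theorem is a statement about the interpolant between the knots. So the reduction to the uniform case must itself be an approximation statement, and quantifying it is precisely where the paper spends its third (and only genuinely non-uniform) step: it writes the interpolant in the B-spline basis $N_{n,j}$ on the knots, obtains the coefficients from the Schoenberg--Whitney collocation system, and shows over a window of length $O(1/\sqrt{\epsilon})$ that both the collocation matrix and the basis functions are $O(\sqrt{\epsilon})$-perturbations of their cardinal counterparts, whence the non-uniform interpolant is $O(\sqrt{\epsilon})$-close to the uniform one. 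Your error budget accounts for amplitude freezing, phase curvature, and spline tails, but not for this interpolant discrepancy --- which, in the paper's accounting, is the very source of the $O(\sqrt{\epsilon})$ in the statement. To repair the proof you would need to insert a perturbation lemma of this kind (or prove directly, under $|\psi''|\leq\epsilon\psi'$, that the warped cardinal interpolant differs from the true non-uniform spline interpolant by $O(\sqrt{\epsilon})$); the rest of your argument then goes through.
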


\begin{proof}

The proof is divided into three steps. First, we show the proof when the signal is harmonic and the sampling is uniform; second, we show the result when the signal is in the adaptive harmonic model and the sampling is uniform; third, we show the general proof.

We start from the harmonic signal $f(t)=\cos(2\pi \alpha t)$, where $0<\alpha<1/2$. 
Without loss of generality, we assume that the sampling rate is $1$ Hz; that is, $\psi(t)=t$ and the ISR is $1$. If not, for example, $\psi(t)=kt$, where $k>1$, we could upwrap the time axis by $s=kt$, and get $\psi(s)=s$ and $f(s)=\cos(2\pi  (\alpha/k)s)$, and the argument holds.  
Note that $f$ is a band-limited harmonic function which is also in $\mathcal{A}_\epsilon^{c_1,c_2}$. Note that the IF of $f(t)$ is $\alpha$, which is less than $\psi'(t)/2$. The uniform sampling scheme corresponds to the weighted Dirac train, which is a tempered distribution $f_{\delta}:=\sum_{l\in\ZZ}f(l)\delta_{l}$, where $\delta_l$ is the Dirac delta measure supported at $l\in\ZZ$.  The $n$-th order spline interpolation scheme, where $n\in\NN$, is performed as a convolution of $f_{\delta}$ with the $n$-th order {\em fundamental cardinal spline function} $\eta_{(n)}$, which satisfies
\begin{equation}
\widehat{\eta_{(n)}}(\xi)=\left[\sum_{l\in\ZZ}\left(\frac{\sin(\pi(\xi-l))}{\pi(\xi-l)}\right)^{n+1}\right]^{-1}\left(\frac{\sin(\pi\xi)}{\pi\xi}\right)^{n+1},
\end{equation}
where $\widehat{\eta_{(n)}}$ means the Fourier transform of $\eta_{(n)}$.
{See \cite[(2.14)]{Aldroubi_Unser_Eden:1992} or \cite[(4.6.9)]{Chui:1992} for example.}
Precisely, the interpolated signal based on the $n$-th order spline interpolation, denoted as $\tilde{f}$, satisfies $\tilde{f}=\eta_{(n)}\star f_\delta$, where $\star$ is the convolution. Note that it is an interpolation and we have $\tilde{f}(l)=f(l)$ for all $l\in\ZZ$. By a direct calculation, we have $\widehat{f_{\delta}}(\xi)=\sum_{k\in\ZZ} \hat{f}(k+\xi)=\frac{1}{2}\sum_{k\in \ZZ}[\delta_{k+\alpha}+\delta_{k-\alpha}]$, which leads to $\widehat{\tilde{f}}(\xi)=\frac{1}{2}\sum_{k\in\ZZ} \widehat{\eta_{(n)}}(\xi)[\delta_{k+\alpha}+\delta_{k-\alpha}]$. As a result, we have
\begin{align}
\tilde{f}(t)= \sum_{k\in\ZZ} \widehat{\eta_{(n)}}(k-\alpha)\cos(2\pi (k-\alpha)t)\label{reflection:formula:harmonic}
\end{align}
We thus finish the proof when the signal is harmonic and the sampling is uniform. Indeed, the main reflection component associated with INF, $1/2$, is $\widehat{\eta_{(n)}}(1-\alpha)\cos(2\pi (1-\alpha)t)$.

Second, without loss of generality, we consider a general function $f(t)=a(t)\cos(2\pi\phi(t))\in\mathcal{A}^{c_1,c_2}_{\epsilon}$, where $\phi'(t)<1/2$. Again, we consider a uniform sampling scheme at the sampling rate 1 Hz; that is, $\psi(t)=t$. Take $l_0\in\ZZ$. Consider a local harmonic approximation of $f$ around $l_0$, denoted as $f^{(l_0)}(t)=a(l_0)\cos[2\pi(\phi(l_0)-l_0\phi'(l_0)+ \phi'(l_0)t)]$. By the adaptive harmonic model and the same argument as that in \cite{Daubechies_Lu_Wu:2011}, we have the control between $f$ and $f^{(l_0)}$ for all $s\in\RR$ 
\begin{equation}\label{Proof:TaylorExpansionf}
f(l_0+s)=f^{(l_0)}(l_0+s)+C|s|\epsilon,
\end{equation}
where $C$ depends on $c_2$ in the definition of $\mathcal{A}_{\epsilon}^{c_1,c_2}$ and is uniformly bounded for all $l_0$. Also recall the fact that the $n$-th order cardinal spline decays exponentially \cite[(4.6.2)]{Chui:1992}. 
By taking the facts of exponential decay of $\eta_{(n)}$ and the Taylor expansion (\ref{Proof:TaylorExpansionf}), we know that for $t\in[l_0-1/2,l_0+1/2]$, 
\begin{equation}
\left|\sum_{l\in\ZZ}(f(l)-f^{(l_0)}(l))\eta_{(n)}(t-l)\right|\leq 2C\epsilon\sum_{k\in\NN} k e^{-ck}=C'\epsilon
\end{equation}
for some constant $c,C'>0$.
Hence, by (\ref{reflection:formula:harmonic}) we have for all $t\in [l_0-1/2,l_0+1/2)$
\begin{align}
&\tilde{f}(t)=\,(\eta_{(n)}\star f_\delta)(t)=(\eta_{(n)}\star f^{(l_0)}_\delta)(t)+O(\epsilon)\nonumber\\
=&\,\sum_{k\in\ZZ} \widehat{\eta_{(n)}}(k-\phi'(l_0))a(t)\cos[2\pi (kt-\phi(t))]+O(\epsilon)\label{reflection:formula:adaptiveharmonic},
\end{align}
where we use the fact that $a(t_0)\cos[2\pi (\phi(l_0)-l_0\phi'(l_0)+(k+\phi'(l_0))t)]=a(t)\cos[2\pi (\phi(t)+kt)]+O(\epsilon)$.
As a result, we have the proof when $f\in\mathcal{A}^{c_1,c_2}_{\epsilon}$ and the sampling is uniform. Indeed, the main reflection component associated with INF, $1/2$, is $\widehat{\eta_{(n)}}(1-\phi'(l_0))a(t)\cos[2\pi (t-\phi(t))]$, which has the IF $1-\phi'(t)$.

Third, we consider the case when the sampling is non-uniform and the signal satisfies the adaptive harmonic model. We consider $f(t)=a(t)\cos(2\pi\phi(t))\in\mathcal{A}^{c_1,c_2}_{\epsilon}$ and the ISR $\psi$ so that $\psi'(t)>2\phi'(t)$ for all $t$. Denote the sampling points as $\mathcal{T}:=\{t_i\}_{i\in\ZZ}$, where $\psi(t_i)=i$. Without loss of generality, we focus on $\tau$ so that $\phi'(\tau)<1/2$ and $\psi'(\tau)=1>2\phi'(\tau)$; the other cases could be proved by scaling. To simplify the notation, we assume that $\tau=N/2$, where $N=2\lceil1/\sqrt{\epsilon}\rceil$. By the assumption of $\psi$, we know that over the interval $[0,N]$, $|\psi'(s)-1|\leq \sqrt{\epsilon}$ for all $s\in [0,N]$. Denote $I=\{i\in\ZZ|\,0\leq t_i\leq N\}\subset\ZZ$ and $J=\{0,1,\ldots,N\}\subset\ZZ$. Note that $|t_l-l|\leq C'|l-\tau|\epsilon$ for all $l\in I$, where $C'$ is a constant depending on $\psi$, so $|t_l-l|$ is bounded by $\sqrt{\epsilon}$ and we know that $|I|=|J|=N+1$. Denote $N_{n,j}$ to be the B-spline of order $n$ defined on $\mathcal{T}$ \cite[(10.2.32)]{DeVilliers:2012}; that is,
\begin{equation}
N_{n,j}(x):=(t_{j+n+1}-t_j)\sum_{k=j}^{j+n+1}\frac{(x-t_k)_+^n}{\Pi_{k\neq l=j}^{j+n+1}(t_l-t_k)},
\end{equation}
where $x_+^n$ is the truncated power defined as $x^n_+=x^n$ when $x\geq 0$ and $x^n_+=0$ when $x<0$.
Note that when the sampling is uniform, that is, when $t_j=j$, then $N_{n,j}(x)=N_n(x-j)$, where $N_n$ is the cardinal B-spline of order $n$ \cite[Theorem 10.2.7]{DeVilliers:2012} satisfying
\begin{equation}
N_n(x):=\frac{1}{n!}\sum_{k=0}^{n+1}(-1)^k\left(\hspace{-5pt}\begin{array}{c}m+1\\k\end{array}\hspace{-5pt}\right)(x-k)_+^n.
\end{equation}
To prove the theorem, note that the technique of using the Fourier transform of the fundamental cardinal spline function does not work here, due to the non-uniform sampling. However, due to the condition of $\psi$, we could control the difference between the non-uniform sampling case we discuss here and the uniform sampling case. Recall that the $n$-th order spline interpolation for the non-uniform sampling is finding the unique \cite[Theorem 10.2.2(d)]{DeVilliers:2012} $n$-th order spline $\bar{f}$ such that 
\begin{equation}
\bar{f}(x)=\sum_{j=-n}^{N-n}c_jN_{n,j}(x)
\end{equation}
where $\{c_{-n},c_{-n+1},\ldots,c_{N-n-1},c_{N-n}\}\subset \RR$ satisfies the conditions
\begin{equation}
\sum_{j=-n}^{N-n}c_jN_{n,j}(t_i)=f(t_i)
\end{equation}
for all $i=0,\ldots,N$ \cite[(10.3.7)]{DeVilliers:2012}. By the Schoenberg-Whitney Theorem \cite[Theorem 10.3.2]{DeVilliers:2012}, the interpolation is solved by finding $c_j$ by
\begin{equation}
\mathbf{c}=\mathbf{A}^{-1}\mathbf{f},
\end{equation}
where $\mathbf{A}$ is a $(N+1)\times (N+1)$ matrix 
\[
\mathbf{A}=\begin{bmatrix}N_{n,-n}(t_0) & N_{n,-n+1}(t_0) & \ldots & N_{n,N-n}(t_0)\\ N_{n,-n}(t_1) & N_{n,-n+1}(t_1) & \ldots & N_{n,N-n}(t_1) \\ \vdots & \vdots & \vdots & \vdots \\ N_{n,-n}(t_N) & N_{n,-n+1}(t_N) & \ldots & N_{n,N-n}(t_N) \end{bmatrix},
\]
$\mathbf{c}=[c_{-n},c_{-n+1},\ldots,c_{N-n-1},c_{N-n}]^T\in\RR^{N+1}$ and $\mathbf{f}=[f(t_0), f(t_1),\ldots,f(t_{N-n-1}),f(t_{N-n})]^T\in\RR^{N+1}$.
Similarly, denote the $n$-th order spline interpolation from the uniform sample $\{(i,f(i))\}_{i\in\ZZ}$ to be $\tilde{f}$, which satisfies
\begin{equation}
\tilde{f}(x)=\sum_{j=-n}^{N-n}d_jN_{n}(x-j)
\end{equation}
where $\{d_{-m},d_{-m+1},\ldots,d_{n-m-1},d_{n-m}\}\subset \RR$ satisfies the conditions
\begin{equation}
\sum_{j=-n}^{N-n}d_jN_{n}(i-j)=f(i)
\end{equation}
for all $i=0,\ldots,N$. The $d_j$ is again solved by applying the Schoenberg-Whitney Theorem. 
By the assumption of $\psi$, we know that $|f(t_l)-f(l)|\leq C|l-\tau|\epsilon$ for all $l \in I$ by (\ref{Proof:TaylorExpansionf}), where $C'$ depends on $\|f'\|_{\infty}$. Also, we have $|N_{n,j}(x)-N_n(x-j)|\leq C''|j-\tau|\epsilon=O(\sqrt{\epsilon})$ for all $j\in I$, where $C''$ is another constant depending on $n$. Hence, $c_j=d_j+O(\sqrt{\epsilon})$ by the $\sqrt{\epsilon}$-perturbation of $\mathbf{A}$. Also note that $N_{n,j}(x)$ and $N_n(x)$ are all compactly supported. As a result, over $[0,N]$, the $n$-th order spline interpolation over the non-uniform sampling $\mathcal{T}$ satisfies
\begin{align}
\bar{f}(x)&=\sum_{j=-n}^{N-n}c_jN_{n,j}(x)=\sum_{j=-n}^{N-n}d_jN_{n}(x-j)+O(\sqrt{\epsilon})\nonumber\\
&=\tilde{f}(x)+O(\sqrt{\epsilon}).
\end{align} 
Thus, by the above argument, we know that $\tilde{f}(x)$ satisfies the reflection property (\ref{reflection:formula:adaptiveharmonic}), and hence we obtain the proof.

\end{proof}

\section{Reflection effect and time-frequency analysis}\label{Section:Numerical}

In practice, the reflection effect discussed in Section \ref{Section:reflection} is pronounced when we apply the TF analysis techniques to study the time-varying dynamics hidden inside the signal, and the effect is even worsened when we apply the sharpening technique in the TF analysis. In this section and the next, we demonstrate how the interpolation induced reflection effect is pronounced by two TF analysis techniques, the RM \cite{Kodera_Gendrin_Villedary:1978,Auger_Flandrin:1995,Chassande-Mottin_Auger_Flandrin:2003} or the SST \cite{Daubechies_Lu_Wu:2011,Chen_Cheng_Wu:2014}. Similar phenomena could be found in other TF analysis methods, ranging from the linear to quadratic methods \cite{Flandrin:1999}. Note that the SST is a variation of the RM, and these techniques could be carried out to sharpen the TF representation determined by a chosen linear TF analysis, for example, the short time Fourier transform (STFT) or continuous wavelet transform (CWT). In a nutshell, RM and SST are nonlinear TF techniques aiming to alleviate the spreading effect in the TF representation determined by the linear TF analysis, which is caused by the inevitable Heisenberg uncertainty principle. These techniques sharpen the TF representation by taking the phase information hidden inside the linear TF analysis into account. For details, we refer the reader with interest to \cite{Auger_Flandrin_Lin_McLaughlin_Meignen_Oberlin_Wu:2013,Daubechies_Wang_Wu:2015} for the review material.

\begin{figure*}[tphb]
\centering
\includegraphics[width=0.45\textwidth]{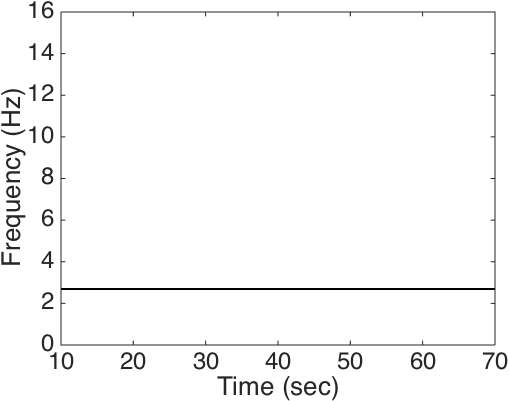}
\includegraphics[width=0.45\textwidth]{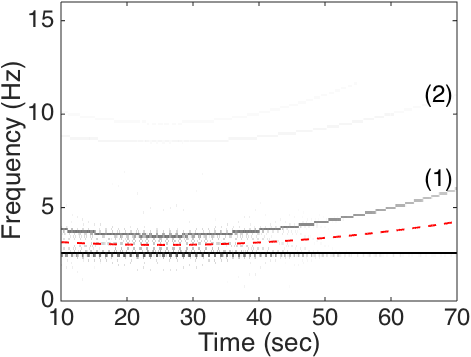}\\
\includegraphics[width=0.45\textwidth]{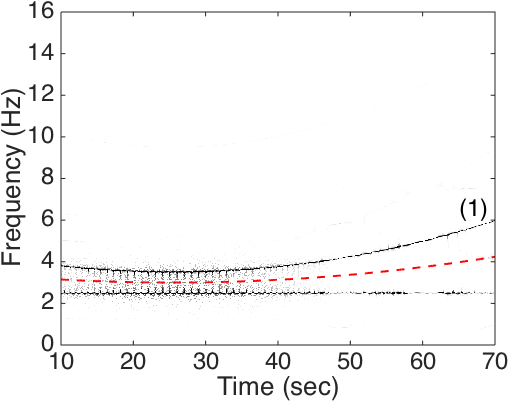}
\includegraphics[width=0.45\textwidth]{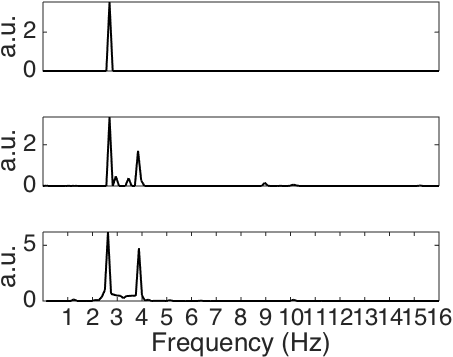}
\caption{The time-frequency (TF) representation of $f_1(t)=\cos(2\pi2.5t)$ and different interpolations from the sampling scheme $\psi_1$. The instantaneous frequency (IF) of $f_1$ is $\phi'=2.5$. 
Left: the SST-STFT result of $f_1(t)$. 
Middle left: the SST-STFT result of the cubic spline (CS) interpolation. The instantaneous Nyquist frequency (INF), $\psi_1'/2$, is superimposed as a red dashed curve on it. Here we could see components with IF $\psi_1'-\phi'$ marked as (1) and $\psi_1'+\phi'$ marked as (2). 
Middle right: the reassigned STFT result of the CS interpolation, where we could only see an extra component with IF $\psi_1'-\phi'$ marked as (1). Note that the artificial component with IF $\psi_1'-\phi'$ is the reflection of $\phi'$ associated with the INF. Right: from top to bottom we show the $40$-th second slice of the TF representations shown in left, middle left, middle and middle right. The unit of the y-axis is arbitrary (a.u.).}\label{Fig1}
\end{figure*}

To demonstrate this reflection effect, we start from a harmonic function $f_1(t)=\cos(2\pi 2.5t)$ and sample it with the ISR $\psi'_1(t)=6+\frac{(t-80/\pi)^2}{800}$. Note that obviously the ISR is greater than the INR of $f$. Then, we demonstrate the reflection effect by applying different common interpolation schemes. We then run SST-STFT and reassigned STFT on the interpolated signal, where the interpolated signal is sampled uniformly at $64$ Hz. To avoid possible boundary effects, we sample the signal for $80$ seconds. The results are shown in Figure \ref{Fig1}. In this study, {the TF representation $R\in\CC^{n\times m}$ is displayed in the log scale. Precisely, we plot $\tilde{R}\in\RR^{n\times m}$, where $\tilde{R}_{i,j}=\max\{10^{-2},\log(1+\min\{|R_{i,j}|,q)\} ) \}$, where $q$ is the $99.8\%$ quantile of all entries of $|R|$.}

We next consider the same procedure on a non-harmonic function $f_2(t)=(0.7+t^{1.1})\cos(2\pi(\pi t+0.2\cos(t)))$, whose AM and IF are $a(t)=0.7+t^{1.1}$ and $\phi'(t)=\pi-0.2\sin(t)$, respectively, and we take another ISR, $\psi'_2(t)=8+0.5\cos(\pi t/10)$. Note that $\psi'_2$ is greater than INR of $f_2$. The result is shown in Figure \ref{Fig2}. Note that we could see a clear reflected component associated with the INF in all the above cases, and the behavior depends on the setup.

\begin{figure*}[tphb]
\centering
\includegraphics[width=0.45\textwidth]{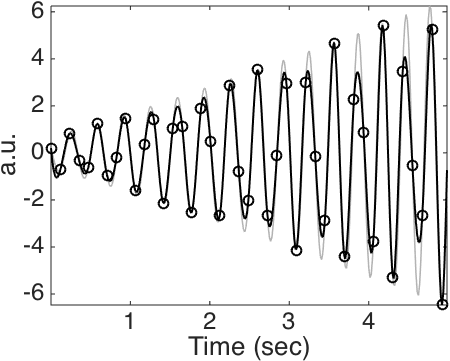}
\includegraphics[width=0.45\textwidth]{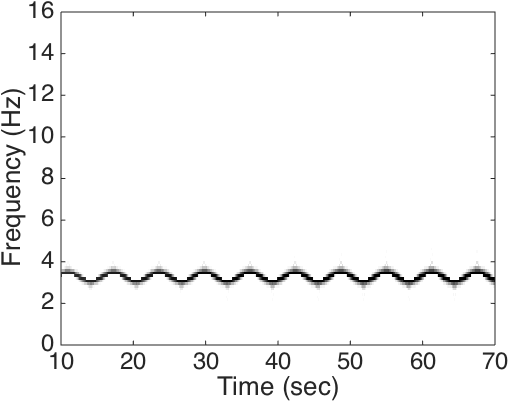}\\
\includegraphics[width=0.45\textwidth]{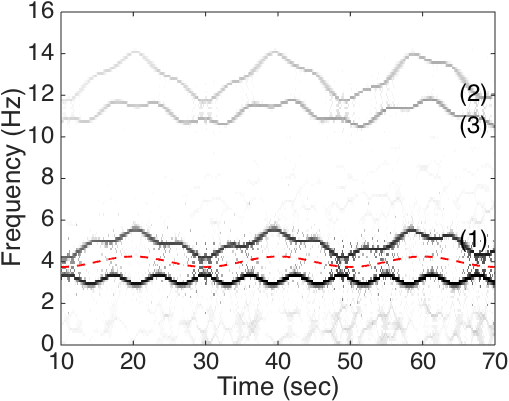}
\includegraphics[width=0.45\textwidth]{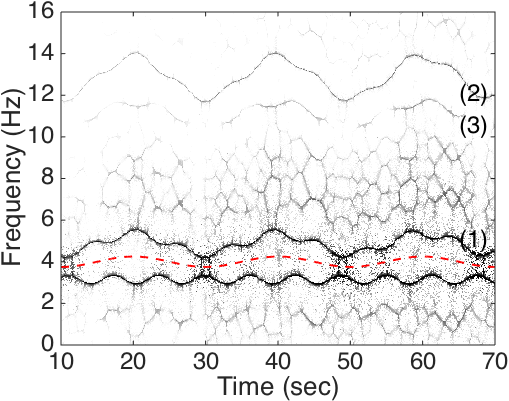}
\caption{The time-frequency representation of $f_2(t)=(0.7+t^{1.1})\cos(2\pi(\pi t+0.2\cos(t)))$ and the interpolation from the sampling dataset associated with $\psi_2(t)$. The instantaneous frequency (IF) of $f_2$ is $\phi'(t)=\pi-0.2\sin(t)$.
Left: $f_2(t)$ is shown as a gray curve with the non-uniform samples superimposed as black circles. The cubic spline (CS) interpolation is shown as the black curve. The unit of the y-axis is arbitrary (a.u.).
Middle left: the SST-STFT result of $f_2(t)$.
Middle right: the SST-STFT result of the CS interpolation. 
Right: the reassigned STFT result of the CS interpolation. 
Note that in the middle, middle right and right subfigures, in addition to $f_2(t)$, we could see components with IF $\psi_2'-\phi'$ marked as (1), $2\psi_2'-\phi'$  marked as (2) and $\psi_2'+\phi'$ marked as (3). Note that the artificial component with IF $\psi_2'-\phi'$ is the reflection of $\phi'$ associated with the INF.
}\label{Fig2}
\end{figure*}

\section{Real signal from Anesthesia}\label{Section:Real}

General anesthesia is usually inevitable for a patient receiving major surgery. For the short term and long term well-being of the patient, the anesthetic agent dose should be dynamically adjusted to achieve an adequate level of anesthesia. It has been shown that the oscillatory patterns in the R-to-R peak interval (RRI) time series of electrocardiography and the respiration contain a lot of information regarding anesthesia dynamics \cite{Huang_Chan_Lin_Wu_Huang:1997,Lin_Hseu_Yien_Tsao:2011,Lin_Wu_Tsao_Yien_Hseu:2014,Wu_Chan_Lin_Yeh:2014,Lin:2015Thesis,Chui_Lin_Wu:2014}. %
It has been shown in \cite{Lin_Hseu_Yien_Tsao:2011,Lin_Wu_Tsao_Yien_Hseu:2014,Wu_Chan_Lin_Yeh:2014,Lin:2015Thesis} that the adaptive harmonic model, the multi-taper reassigned STFT \cite{Xiao_Flandrin:2007,Babadi_Brown:2014} and the multi-taper SST-STFT \cite{Lin_Wu_Tsao_Yien_Hseu:2014,Lin:2015Thesis} serve as a good framework toward the goal, especially when noise is inevitable. Precisely, the oscillatory behavior of the RRI during anesthesia could be modeled by the adaptive harmonic model, and how strong the component is related directly to the anesthetic depth. 
We would extract stable features by the multi-taper reassigned STFT and the multi-taper SST-STFT to better quantify the anesthetic depth.
While its clinical value has been shown in different problems, its sampling theory issue is left unanswered to the best of our knowledge. 
We mention that the sampling issue for the power spectrum approach to study HRV based on the stationarity assumption has been widely studied; see for example \cite{Laguna_Moody_Mark:1998,Singh_Vinod_Saxena:2004}. 
It could be argued that reassigned STFT and SST-STFT add complications on the top of a spectrum approach, but note the difference between the underlying models which are aiming to capture different phenomena.

Here we demonstrate two examples regarding this direction which might generate potential artifact in the TF representation -- IHR analysis and EDR analysis. 
The IHR and EDR are acquired from the recorded ECG signal in the following way. Denote the recorded lead II ECG signal as $E(t)$ which is digitalized at the sampling rate $1000$Hz. The R peak detection was automatically determined from $E(t)$. The ECG signal is clean without significant noise contamination, and no ectopic beats nor electro-cauterization happen in the recorded signal. The collected RRI time series is denoted as $\mathcal{X}=\{t_i,t_{i+1}-t_i\}_{i=1}^N$, where $t_i\in\RR$ is the time stamp of the $i$-th R peak. Then, we follow the common practice and approximate the IHR from $\mathcal{X}$ by the cubic spline interpolation \cite{TaskForce:1996}, and denote the approximated IHR as $\tilde{r}_{\textsf{m}}$.  
Next, $\tilde{r}_{\textsf{m}}$ is resampled to be equally spaced at $8$ Hz for the multi-taper SST-STFT and multi-taper reassigned STFT analysis \cite{Singh_Vinod_Saxena:2004}, as it is commonly believed that most useful information inside IHR is below $0.5$ Hz under the stationary assumption.
{In addition to the time stamps of the R peaks,} we also have a non-uniform sampling dataset $\{t_i,E(t_i)\}_{i=1}^N$, where $E(t_i)$ is the amplitude of the $i$-th R peak. The EDR signal, a surrogate of the respiratory signal denoted as $\tilde{R}(t)$, is built up by applying the cubic spline interpolation on $\{t_i,E(t_i)\}_{i=1}^N$. We mention that although the amplitude scales of $R(t)$ and $\tilde{R}(t)$ are different, they share the same oscillatory information inside the respiratory signal, in particular the IF. We refer readers with interest in EDR to \cite{Moody_Mark_Zocoola_Mantero:1985,Chui_Lin_Wu:2014} for details.
To confirm the existence of the reflection effect as an artifact in the EDR signal, when we record the ECG signal, we simultaneously record the airway flow signal, which is denoted as $R(t)$. To avoid any possible artifacts, the airway flow signal is uniformly sampled at the sampling rate $25$ Hz. As R peaks are viewed as the non-uniform sampling of the IHR and EDR, the ISR, denoted as $\psi'(t)$, could be estimated by the cubic spline interpolated function from $\left\{t_i,(t_{i+1}-t_i)^{-1}\right\}_{i=1}^N$. 

{As we have the airflow signal serving as the ground truth for the respiratory signal, we start from showing the result of the EDR signal.} The results of multi-taper SST-STFT and the multi-taper reassigned STFT on $\tilde{R}(t)$ based on the cubic spline interpolation and the airway flow signal are shown in Figure \ref{Fig5}. 
Here the multiples of the base oscillatory component with the IF at about $0.5$Hz are associated with the notion called ``wave-shape function'', for which we refer reader with interest to \cite{Wu:2013}.

\begin{figure}[t]
\centering
\includegraphics[width=0.6\textwidth]{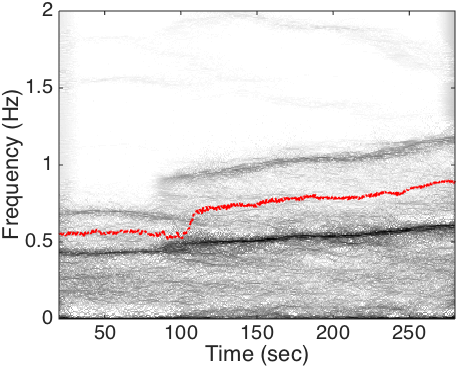}
\includegraphics[width=0.6\textwidth]{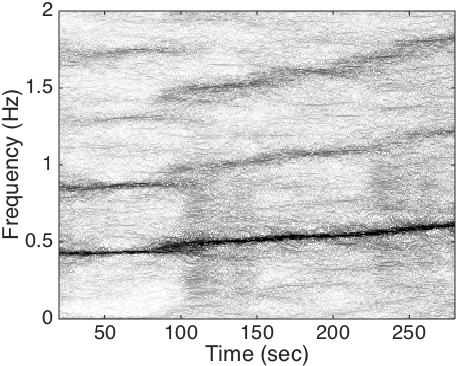}
\caption{The results of the ECG-derived respiration (EDR) signal and the airflow signal recorded simultaneously. The signals here are the same as those shown in Figure \ref{FigExample}, while we show here the whole $280$ second signal. The instantaneous Nyquist frequency (INF) is superimposed as a red dashed curve. Left: the multi-taper synchrosqueezed STFT (SST-STFT) result of the EDR based on the cubic spline (CS) interpolation. Right: the multi-taper SST-STFT of the airflow signal. It is clear that the base component with IF about $0.5$ Hz in the airflow signal is well captured by the EDR signal, while in the EDR signal there is an artificial reflected component associated with the INF. Also note the temporal reassignment effect in the multi-taper reassigned STFT around $100$-th second in in Figure \ref{FigExample} -- the TF resolution of the multi-taper reassignment STFT is sharper than that of the multi-taper SST-STFT shown, so the reflection effect is more visible.}\label{Fig5}
\end{figure}

The analysis results of the IHR signal based on multi-taper SST-STFT and multi-taper reassigned STFT are shown in Figures \ref{Fig40} and \ref{Fig4}. In Figure \ref{Fig40}, the whole 30 minutes analysis result is shown, while in Figure \ref{Fig4} we illustrate a zoom-in results of a $6$ minutes sub-interval. The TF representation provides several informations, in particular the time-varying frequency of the IHR signal. Note that there is a dominant curve near $0.5$ Hz, denoted as $x_1$, which is associated with the well-known phenomena of respiratory sinus arrhythmia (the respiratory signal is not shown), which oscillates at frequency about $0.5$ Hz. 
It is clear that we can see the reflected effect in the interpolated IHR $\tilde{r}_{\textsf{m}}$. Indeed, the dominant curve around $0.7$ Hz is the reflection of the component $x_1$ which is related to the INF $\psi'(t)/2$.

\begin{figure*}[tphb]
\centering
\includegraphics[width=0.95\textwidth]{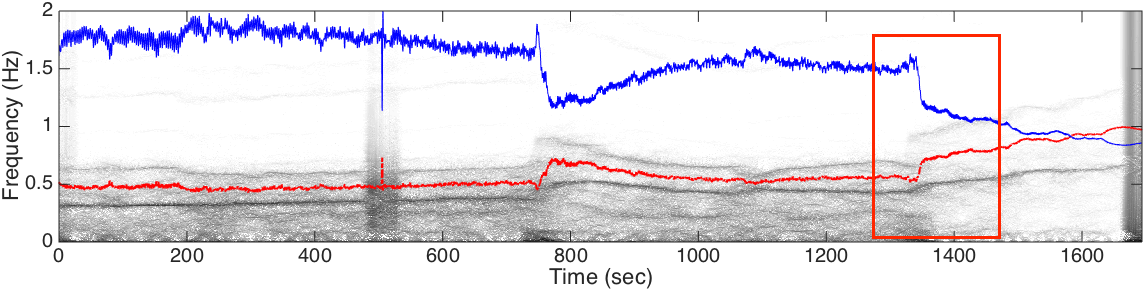}\\
\includegraphics[width=0.95\textwidth]{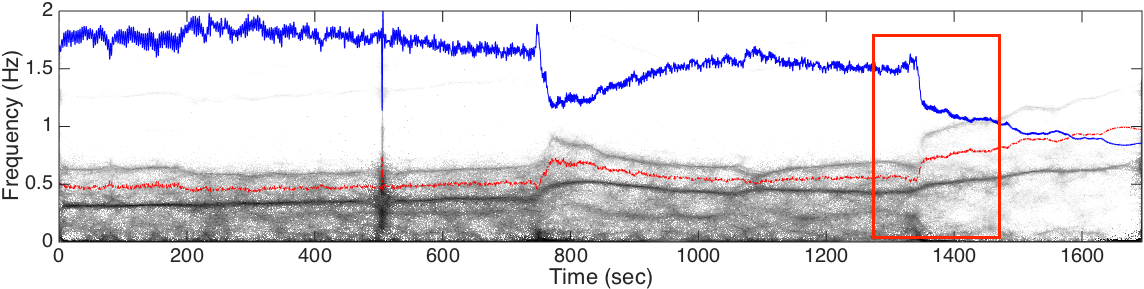}
\caption{The time-frequency representation of the IHR signal, $\tilde{r}_{\textsf{m}}(t)$, over the whole $30$ minutes period. The instantaneous Nyquist frequency is superimposed as a red dashed curve on it; the IHR signal is superimposed as a blue curve. Top: the multi-taper SST-STFT result of the IHR based on the cubic spline (CS) interpolation; bottom: the multi-taper reassigned-STFT result of the IHR based on the CS interpolation. The signal and the TF representation in the red box are zoomed in and displayed in Figure \ref{Fig4}.}\label{Fig40}
\end{figure*}

\begin{figure}[!t]
\centering
\includegraphics[width=0.95\textwidth]{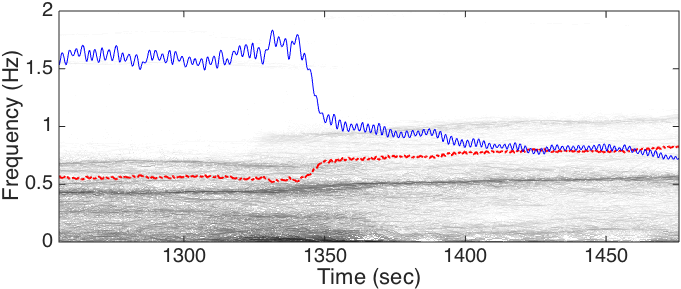}\\
\includegraphics[width=0.95\textwidth]{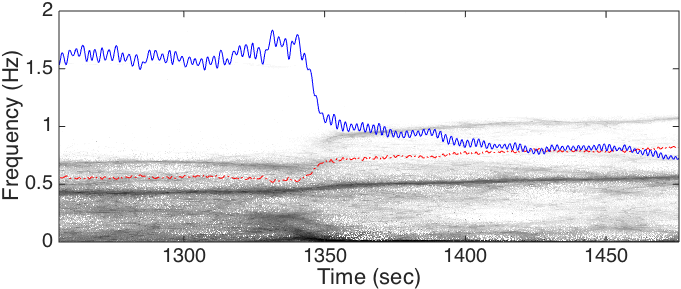}
\caption{The time-frequency (TF) representation of the IHR signal, $\tilde{r}_{\textsf{m}}(t)$, over a 6 minutes period. This is a zoom-in illustration of Figure \ref{Fig40} for the sake of showing the relationship between the IHR signal and the time-frequency representation. The instantaneous Nyquist frequency is superimposed as a red dashed curve on it; the IHR signal is superimposed as a blue curve. Top: the multi-taper SST-STFT result of the IHR based on the cubic spline (CS) interpolation; bottom: the multi-taper reassigned-STFT result of the IHR based on the CS interpolation. Note that the temporal reassignment in the multi-taper reassigned-STFT sharpen the reflected component around the period around the $1350$-th second. It is clear that the IHR oscillates faster after $1350$-sec, which leads to a higher instantaneous frequency. }\label{Fig4}
\end{figure}

To close this section, we demonstrate one more example with a different interpolation scheme, the shape-preserving piecewise cubic interpolation (PCHIP). The TF representation of the EDR signal generated via the PCHIP interpolation scheme determined by the multitaper SST is shown in Figure \ref{Fig8}. It is clear that in the PCHIP interpolation scheme, the reflection effect still exists. This demonstration shows that the reflection effect is not specialized to the spline interpolation. In Figure \ref{Fig8}, we also show the TF representation of the EDR signal generated via the PCHIP interpolation scheme determined by STFT. The STFT is carried out with the Gaussian window. It is also clear that we could see the reflection effect.

\begin{figure}[!t]
\centering
\includegraphics[width=0.7\textwidth]{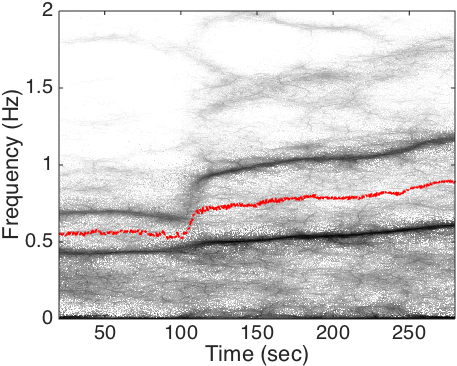}
\includegraphics[width=0.7\textwidth]{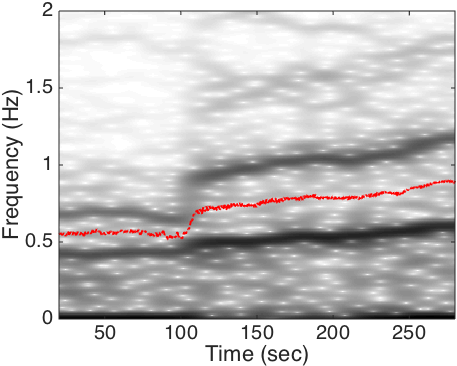}
\caption{The time-frequency representation of the ECG-derived respiration (EDR) signal generated via the piecewise cubic interpolation (PCHIP) interpolation scheme. The underlying ECG signal for the EDR signal is the same as that shown in Figure \ref{Fig5}. The instantaneous Nyquist frequency (INF) is superimposed as a red dashed curve. Left: the multi-taper reassigned short time Fourier transform (STFT) result of the EDR based on the PCHIP interpolation. Right: the STFT of the EDR based on the PCHIP interpolation. It is clear that the reflection effect exists. Also note that the multi-taper reassigned STFT is sharper than the STFT.}\label{Fig8}
\end{figure}

\section{Possible solutions to the reflection effect}\label{Section:PossibleSolution}

A naive solution to the reflection effect is by the hard threshold. Precisely, given a TF representation $R:\RR\times \RR^+\to \CC$, we could define a new TF representation $\tilde{R}$ by setting the TF representation coefficients with frequency larger than INF to zero; that is,
\begin{equation}
\tilde{R}(t,\xi)=\left\{
\begin{array}{ll}
R(t,\xi)& \mbox{ when }\xi\leq\psi'(t)/2\\
0& \mbox{ when }\xi>\psi'(t)/2
\end{array}
\right..
\end{equation}
This will directly remove the reflection artifact induced by the spline interpolation. Although such an adaptive filtering could mitigate the problem, due to the nonstationarity, this approach might also remove the possible information hidden inside the signal. Another naive solution is to take a higher order spline interpolation so that in the Fourier domain the spectrum of the fundamental cardinal spline is closer to the step function. Indeed, it is well known that when $n\to \infty$, $\tilde{\eta}^{(n)}$ converges to $\chi_{[-1,1]}$ in the $L^p$ sense \cite[(16)]{Unser:1999}. See Figure \ref{Fig7} for an example of the EDR signal generated by the $12$-th order spline interpolation. It is clear that compared with the cubic spline interpolation result shown in Figures \ref{FigExample} and \ref{Fig5}, the reflection effect is alleviated. While this approach seems to work, however, it is well known that the higher the order of spline interpolation is, the more severe the overfitting is. This fact might limit its application. Depending on the application, we could consider different penalty to determine the optimal order of spline interpolation. Yet another naive solution is to pre-process the signal by a low pass filter to the interpolated signal to reduce the reflection artifact. However, this approach only works when the signal is band-limited -- in general, for example the adaptive harmonic model, the structured artifact might not be removed but perturbed by the low pass filter, which leads to more complicated artifacts. 
These findings suggest that we should consider to design a different interpolation scheme to balance between the the reflection effect and the interpolation purpose. This opens a future research direction in the TF analysis.

\begin{figure}[!t]
\centering
\includegraphics[width=0.7\textwidth]{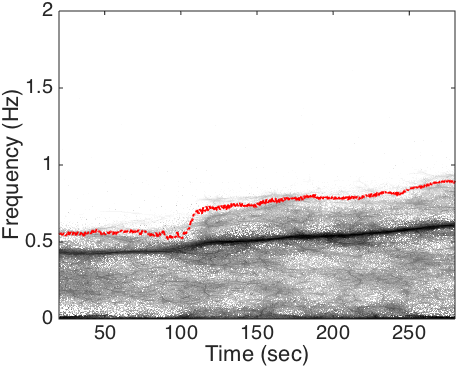}
\includegraphics[width=0.7\textwidth]{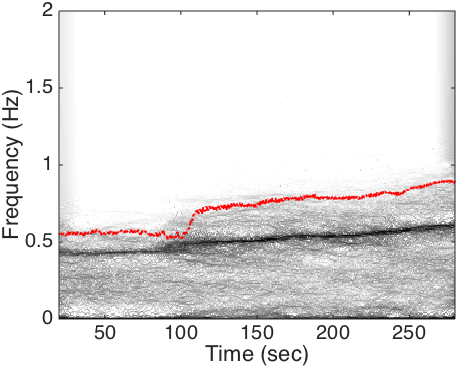}
\caption{The time-frequency representation of the ECG-derived respiratory (EDR) signal determined by the $12$-th order spline interpolation. The ECG signal we use to construct the EDR signal here is the same as that used for the EDR signal shown in Figure \ref{FigExample}. The instantaneous Nyquist frequency is superimposed as a red dashed curve on it. Left: the multi-taper reassigned-STFT result of the EDR signal determined by the $8$-th order spline interpolation; right: the multi-taper SST-STFT result of the EDR signal determined by the $8$-th order spline interpolation. Clearly, compared with Figures \ref{FigExample} and \ref{Fig5}, the reflection effect is reduced.}\label{Fig7}
\end{figure}

\section{Discussion}\label{Section:Conclusion}

While extracting oscillatory features from a given time series via TF analysis is attracting more interest in bio-medical field, in particular when people want to study the dynamics, the artifacts are enhanced, in particular the reflection effect we discuss in this paper. Thus, when apply the TF analysis to study the bio-medical signal, in addition to search for features inside the TF representation, it is important to pay attention not only to the sampling effect but also to the interpolation scheme, which might generate artifacts and introduce misleading interpretations. 

The evidence of the reflection effect is shown clearly in the simulated results in Section \ref{Section:Numerical}. In addition, it is clear that a reflection component exists in the EDR signal, as is shown in Figure \ref{Fig5} in Section \ref{Section:Real}, while the reflection component does not exist in the airflow signal. While the reflection component has a quite different IF behavior compared with the base component, if not being careful, these observations might lead to a misinterpretation that there is a possible hidden structure inside the respiratory signal. The same comment holds for the IHR signal. The reflective component shown in Figure \ref{Fig40} might lead to a misinterpretation that there are two oscillatory components inside the IHR; however, it is actually originated from the numerical interpretation. In order to get the correct information, few possible methods to alleviate the reflection effect is proposed in Section \ref{Section:PossibleSolution}.

In addition to designing a different interpolation scheme raised in Section \ref{Section:PossibleSolution}, there are several open problems and their importances have been indicated in this paper. First, while there are several different interpolation schemes available for reconstructing the signal from the (non-)uniform sampling points, depending on the application, it is important to pay attention to the interplay between the TF analysis and the artifact generated by the interpolation scheme, so that the results are not misled by these artifacts. 
Second, it raises a question specific to the HRV or BRV -- as the sampling scheme is limited by the physiological facts, it is not possible to sample the instantaneous heart rate as fast as possible. Thus, is it possible to evaluated the dynamical spectral information directly from the sampled time series, so that we could avoid the need of the interpolation?   
One possible direction is the following. For a given non-uniform sampled time series, the Lomb periodogram \cite{Laguna_Moody_Mark:1998} or the spectrum of counts \cite{DeBoer:1985} are commonly applied methods if we want to estimate the power spectrum directly from the non-uniform sampled time series. While they allow us to avoid the interpolation, however, the necessary phase information for the reassignment technique and SST is missed in these methods. Thus, it deserves a further study to explore the possibility to combine the benefits of the Lomb periodogram or the spectrum of counts and the reassignment technique.
\section{Conclusion}

We report a potential artifact shown in the interaction of the interpolation and the TF analysis. This artifact is theoretically studied under the spline interpolation scheme, and is referred to as the reflection effect. A solution as well as the future directions are provided. While in this paper we demonstrate the evidence based on the bio-medical signals, including the IHR signal and the EDR signal, this theoretical phenomena might happen in other fields. In conclusion, when we apply the TF analysis techniques to study the time-varying dynamics hidden inside a time series, it is important to pay attention to avoid any mis-interpretation, in particular when the sampling rate is not significantly higher than the spectrum we are interested in.

\section*{Acknowledgement}
Hau-tieng Wu would like to thank Professor Charles K. Chui, Professor Ingrid Daubechies and Professor Andrey Feuerverger for fruitful discussions. Hau-tieng Wu acknowledges the support of Sloan Research Fellow FR-2015-65363. Part of this work was done during Hau-tieng Wu's visit to National Center for Theoretical Sciences (NCTS), Taiwan; he would like to thank NCTS for its hospitality. The research of Yu-Ting Lin was supported by Shin Kong Wu Ho-Su Memorial Hospital (SKH-8302-102-DR-27).

\bibliographystyle{IEEEtran} 
\bibliography{MedicineVariability,TFanalysis}

\end{document}